\documentclass[final,journal]{IEEEtran}

\usepackage{graphicx}
\usepackage{color}
\graphicspath{{Figures/}}
\usepackage{amsmath,amsthm,amssymb}
\usepackage{listings}

\newtheorem{lemma}{Lemma}

\newtheorem{theorem}{Theorem}
\newtheorem{definition}{Definition}

\lstnewenvironment{matlab}%
{\lstset{mathescape=true,
numbers=left,numberstyle=\tiny,numbersep=5pt,columns=flexible,%
keywords={for,end,function,if,else,while},%
moredelim=[is][\ttfamily]{|}{|}}}{}

\title{Graph-Theoretic Analysis of Residual Generation Under Computational Constraints}
\author{Jan {\AA}slund%
\thanks{Department of Electrical Engineering,
Linköping University, Linköping, Sweden 
(e-mail: jan.aslund@liu.se).}
}

\date{}
\begin{document}
  \maketitle

\begin{abstract}
A unified structural framework is presented for model-based fault diagnosis 
that explicitly incorporates both fault locations and constraints imposed by the 
residual generation methodology. Building on the concepts of proper and minimal 
structurally overdetermined (PSO/MSO) sets and Test Equation Supports (TES/MTES), 
the framework introduces testable PSO sets, Residual Generation (RG) sets, irreducible
fault signatures (IFS),
and Irreducible RG (IRG) sets to characterize which submodels are suitable for 
residual generation under given computational restrictions. 
An operator $M^*$ is defined to extract, from any model, 
the largest testable PSO subset consistent with a specified residual generation method. 
Using this operator, an algorithm is developed to compute all RG sets, 
and it is shown that irreducible fault signature sets form the join-irreducible elements of a join-semilattice 
of sets and fully capture the multiple-fault isolability properties 
in the method-constrained setting. The approach is exemplified on a 
semi-explicit linear DAE model, where low structural differential index
can be used to define $M^*$. The results demonstrate that the proposed framework generalizes 
 MTES-based analysis to residual generation scenarios with explicit computational limitations.

\end{abstract}

\begin{IEEEkeywords}
Fault detection and isolation,
        Model-based diagnosis,
        Structural methods
\end{IEEEkeywords}

   \section{Introduction}

 A well-established approach in model-based diagnosis is to exploit 
 redundancy within a system model to design analytical tests capable 
 of detecting inconsistencies. Numerous methods have been proposed to identify 
structurally redundant submodels suitable for this purpose. 
In~\cite{MSOalg:2008,Gelso:2008}, algorithms were developed for 
identifying Minimal Structurally Overdetermined (MSO) sets. 
The method of Possible Conflicts was introduced in~\cite{Pulido:2004}, 
while~\cite{Trave:2006} employed Structural Analytical Redundancy Relations. 
A comprehensive overview of various definitions and algorithms 
can be found in~\cite{SAFFE:09:Armengol}.

In addition to redundancy, an important aspect is the methodology employed 
for test construction. Several studies have addressed this issue by 
proposing structural characterizations of equation subsets that 
are applicable under specific constraints imposed by the residual 
generation process.
In~\cite{SMCA:10:CS:MN:2010}, a method for identifying MSO sets 
for sequential residual generation was proposed, 
explicitly considering the computational tools 
available for solving algebraic and differential
equations as well as for numerical differentiation. 

The notion of causally computable MSO sets was later 
introduced in~\cite{SMCA11:Causal:}, where computational
sequences were derived under the assumption 
that linear square subsystems are invertible, 
subject to causality constraints among variables. 
In~\cite{SMCA12:CausalIsol}, the concept of the monitorable part was defined under 
derivative, integral, and mixed causality assumptions, and the 
implications for fault isolability were systematically analyzed.
Furthermore,~\cite{AAC:VNLE:2016} presented a method for identifying
diagnosable subsystems characterized by their suitability for observer-based 
residual generation and their ability to achieve maximal fault isolability. 

The degree of redundancy also plays a critical role in the design of 
diagnostic tests. For instance, in observer-based approaches, high redundancy 
is desirable as it enhances robustness to modeling errors and measurement noise. 
However, increased redundancy may reduce fault isolability, 
since residuals derived from larger submodels tend to be sensitive to a broader set of faults. 
A structured approach to address this trade-off was introduced 
in~\cite{DX10MFS:2010}, where the concept of Test Equation Supports (TESs) 
was defined to account for both the level of redundancy
and the distribution of faults in the system model.

The aim of this work is to integrate the aforementioned perspectives 
through a unified framework that considers both fault locations and 
the residual generation methodology. The analysis further investigates 
the implications of this integration and establishes 
an appropriate graph-theoretical basis.

  \section{Theoretical background}
\label{sec:Theory}

Graph theory provides a useful framework for analyzing structural properties
 of systems and for identifying submodels with redundancy. 
 Consider a set of equations $M = \{ e_1, \ldots, e_m \}$ defined over a set of unknown 
 variables $X = \{ x_1, \ldots, x_n \}$. A bipartite graph can then be constructed with 
 node sets $M$ and $X$, and an edge set $E \subset M \times X$, 
 where $(e_i, x_j) \in E$ if the variable $x_j \in X$ appears in equation $e_i \in M$.

As an illustrative example, consider the system of ordinary differential 
equations (ODEs)
\begin{align*}
  \dot{x}_{1} &= x_{1} + x_{3},\\
  \dot{x}_{2} &= x_{2} - x_{3},\\
  \dot{x}_{3} &= x_{1}.
\end{align*}
The corresponding bipartite graph can be represented by the bi-adjacency matrix
\begin{equation*}
  \quad\begin{bmatrix}
    X&&X\\
    &X&X\\
    X&&X\\
  \end{bmatrix}
\end{equation*}
where an $X$ in position $(i,j)$ indicates that $(e_i, x_j) \in E$. 
An alternative representation that explicitly incorporates differential 
constraints is described in~\cite{blanke2016diagnosis}.

The Dulmage--Mendelsohn (DM) decomposition of bipartite graphs is a powerful 
tool for structural analysis~\cite{blanke2016diagnosis}, \cite{Mendelsohn:1958}. 
The decomposition, illustrated for a general system in Figure~\ref{fig:dmperm}, 
permutes the rows and columns of the bi-adjacency matrix so that three distinct 
parts of the model can be identified: an overdetermined part $M^+$ with more equations 
than unknowns ($|M^+| > |X^+|$), an exactly determined part $M^0$,
which introduces as many new variables $X^0$ as equations ($|M^0| = |X^0|$), 
and an underdetermined part $M^-$, which introduces more new variables $X^-$ than 
equations ($|M^-| < |X^-|$).

\begin{figure}[htbp]
    \centering
    \includegraphics[width=0.3\textwidth]{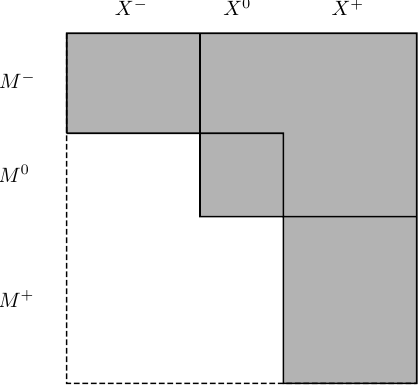}
    \caption{Illustration of the bi-adjacency matrix in a Dulmage--Mendelsohn decomposition, 
    where all edges lie within the shaded area}
    \label{fig:dmperm}
\end{figure}

The overdetermined part $M^+$ is of particular interest for diagnosis, since it contains 
redundant information that can be used to detect inconsistencies. 

A characterization that a matrix is {\it structurally invertible}
means that almost all numerical matrices with the same
zero–nonzero pattern are invertible. 
Similar structural characterizations will be used later in the paper.
The following definitions from~\cite{MSOalg:2008} provide two structural characterizations 
of redundant subsets that are suitable for the design of diagnostic tests.
\begin{definition}[PSO set, MSO set, and redundancy $\varphi$]\label{def:msopso}
  A set of equations $M$ is Proper Structurally Overdetermined (PSO) 
  if $M^+ = M$. A PSO set is Minimal Structurally Overdetermined (MSO) 
  if no proper subset of $M$ is PSO. The redundancy $\varphi(M)$ of a PSO set $M$ is defined
  as the difference between the number of equations and the number of unknown variables in $M$.
\end{definition}

In~\cite{DX10MFS:2010}, the location of faults was incorporated into the structural analysis 
through the following definitions.
\begin{definition}[TES and MTES]\label{def:tesmtes}
  Let $F(M)$ denote the set of faults included in an equation set $M$.
   A PSO set $M$ is a  Test Equation Support (TES) if $F(M) \neq \emptyset$ and 
   for any PSO set $M'\supsetneq M$ it holds that $F(M')\supsetneq F(M)$.
   A TES is minimal (MTES) if no proper subset of $M$ is a TES.
\end{definition}
In the definition of MTESs, fault locations in the model are explicitly taken 
into account. The main objective of this paper is to generalize these 
structural concepts by also incorporating the residual generation method 
into the analysis, and thereby to identify a graph-theoretical representation 
that captures both fault distribution and computational constraints.

For further examples, discussions, and information about implementation of 
MSO sets and MTESs algorithms, see \cite{glotin2024dx}  and \cite{FRISK20173287}.

  \section{Basic definitions}\label{sec:definitions}
The following system of linear equations will be used in the forthcoming
sections to illustrate some basic ideas and to highlight the consequences
of imposing restrictions on the method used in the design of diagnostic
tests.

  \begin{alignat}{3}
    e_1:&&\quad x_1+2x_2&=u_1+v_1\nonumber\\
    e_2:&&x_1+x_2&=u_2+v_2\nonumber\\
    e_3:&&y_1&=x_1+3x_2+v_3\label{eq:exsyst}\\
    e_4:&&y_2&=x_1-x_2+f_1+v_4\nonumber\\
    e_5:&&y_3&=x_2+f_2+v_5\nonumber
  \end{alignat}

 Here $u_i$ and $y_i$ are known signals, $x_i$ are unknown variables,
 and the vector ${\bf v}=(v_1,\ldots,v_5)^T$ is gaussian random vector
 with zero mean and identity covariance matrix.
 The fault signals $f_1$, and $f_2$, are modeled as additive faults,
 and note that it is not assumed that all equations
 include a fault. This is often the case for basic physical equations,
 or faults that can be discovered by other means and therefore can be
 handled separately. 
 One example of the latter is sensors with  a defined behavior under 
 fault conditions, such as a sensor that is considered healthy when its output remains within 
 a specified fail‑safe range 
 and indicates a fault when the signal leaves this range;
 see e.g. Chapter 12 and 13 in  \cite{dunn2005fundamentals}. 

\subsection{Restrictions on the residual generation methodology}
From now on, it is assumed that variables in system~\eqref{eq:exsyst}
can only be computed by sequential back-substitution, and that no row
operations to eliminate variables are allowed.

The purpose of analyzing system~\eqref{eq:exsyst} under this artificial
computational constraint is to mimic situations that may arise, for
example, when considering system models with non-invertible subsystems.
A more natural computational constraint will be presented in
Section~\ref{sec:examples}.

\subsection{Residual generation}
In this section, the possibility to create residual sensitive to faults in model 
\eqref{eq:exsyst} is investigated under the restriction introduced
above. First, the noise signals are omitted and it is assumed that the
system operates in the fault-free mode. Consider the variable $x_2$.
Under the computational restriction, the only way to compute $x_2$ is to
use equation $e_5$ in model~\eqref{eq:exsyst},
\begin{equation*}
  x_2 = y_3.
\end{equation*}
Then $x_1$ can be computed using, for example, equation $e_2$,
\begin{equation*}
  x_1 = u_2 - y_3.
\end{equation*}

By substituting $x_1$ and $x_2$ into the redundant equation $e_1$, the
residual
\begin{equation*}
  r_1 = u_1 - u_2 - y_3 = 0
\end{equation*}
is obtained, which is equal to zero in the fault-free case. Including
faults and noise in the model yields
\begin{equation*}
  r_1 = u_1 - u_2 - y_3 = f_2 + w_1,
\end{equation*}
where $w_1 = -v_1 + v_2 - v_5$. The first expression shows how the
residual is computed from the known signals $u_1$, $u_2$, and $y_3$,
while the second expression shows how the residual depends on the
unknown fault $f_2$ and the noise $w_1$, which can be used to evaluate
the fault-to-noise ratio of the residual.

In this case, the MSO set $\{e_1,e_2,e_5\}$ was used to construct the
residual $r_1$. This was possible since, first, the unknown variables in
the submodel could be computed under the imposed restriction and,
second, there was redundancy in the model, i.e., the system was
overdetermined. These two properties can be read directly from a graph representation
of the system and admit a straightforward structural characterization.

\begin{definition}[Structurally testable]\label{def:ss}
  Let a computational constraints be specified by a structural characterization
  of which equation sets can be used to generate residuals that are sensitive to all 
  faults included in the equation set.
  A PSO set that satisfies this structural characterization is called
  structurally testable.
\end{definition}

Next, it is shown how the fault-to-noise ratio can be improved by
considering a larger PSO set with more redundant equations. As a first
step, the MSO set $\{e_1,e_3,e_5\}$, instead of $\{e_1,e_2,e_5\}$, is
used to derive the residual
\begin{equation*}
  r_2 = y_1 + y_3 + u_1 = f_2 + w_2,
\end{equation*}
where $w_2 = v_1 + v_5 - v_3$. The residual has been scaled so that the
dependence on the fault $f_2$ is the same for both residuals. The next
step is to form the affine combination
\begin{equation*}
  r = k r_1 + (1-k) r_2 = f_2 + v,
\end{equation*}
where $v = k w_1 + (1-k) w_2$, and choose $k$ to minimize the variance
of $v$.

It is straightforward to show that the covariance matrix of the random
vector $\mathbf{w} = \begin{pmatrix} w_1 & w_2 \end{pmatrix}^T$ is
\begin{equation*}
  \Sigma_{\mathbf{w}} =
  \begin{pmatrix}
    \sigma^2_{11} & \sigma_{12}\\
    \sigma_{21}   & \sigma^2_{22}
  \end{pmatrix}
  =
  \begin{pmatrix}
    3 & -1\\
    -1 & 3
  \end{pmatrix}.
\end{equation*}
By minimizing the variance of $v$ with respect to $k$, the residual
\begin{align*}
  r
  &= \frac{\sigma^2_{22} - \sigma_{12}}{\sigma^2_{11} + \sigma^2_{22} - 2\sigma_{12}}\, r_1
   + \frac{\sigma^2_{11} - \sigma_{12}}{\sigma^2_{11} + \sigma^2_{22} - 2\sigma_{12}}\, r_2\\
  &= \frac{1}{2} r_1 + \frac{1}{2} r_2 = f_2 + v
\end{align*}
is obtained for $k = 1/2$, with variance
\begin{equation*}
  \sigma_v^2
  = k^2 \sigma_{11}^2
    + 2k(1-k)\sigma_{12}
    + (1-k)^2 \sigma_{22}^2
  = 1.
\end{equation*}
Comparing this value with the variance of the noise in the individual
residuals $r_1$ and $r_2$, i.e., $\sigma_{11}^2 = \sigma_{22}^2 = 3$,
shows that by using the larger testable PSO set
$M_{1} = \{e_1,e_2,e_3,e_5\}$ it is possible to construct a residual
with a higher fault-to-noise ratio than for either $r_1$ or $r_2$
alone. The technique used above to combine the two residuals is a standard example
from the theory of sensor fusion~\cite{Kay1993}.

Higher order of redundancy can be exploited in many dynamic and
nonlinear models to further suppress model and measurement noise when
designing residuals.
In the MSO algorithm in~\cite{MSOalg:2008} and the MTES algorithm in~\cite{DX10MFS:2010},
a crucial step is the computation of the overdetermined part $M^+$ 
defined as the largest subset of $M$ that is a PSO set. The operator $M^*$, defined below,
will replace the operator $M^+$ in the algorithm introduced in the Section~\ref{sec:algorithm}.

The definition of the operator $M^+$ was extened in~\cite{SMCA11:Causal:} to an operator 
that gives the largest PSO subset of a model
for which all variables can be computed under the assumption that linear square
subsystems can be inverted and subject to causality constraints between variables.
In~\cite{SMCA12:CausalIsol}, the monitorable part of a model was defined in an analogous
way under derivative, integral, and mixed causality assumptions.
The following general operator will be considered in this paper.
\begin{definition}[The operator $M^*$]
  Given a set of equations $M$, the set $M^*$ is defined as the largest
  testable PSO subset of $M$.
\end{definition}

\subsection{Fault signature}
Different residuals are sensitive to different sets of faults, and the
possible sets of fault signatures of the residuals are given by the
following definition.
\begin{definition}[Fault signature]\label{def:fset}
  Let $F(M)$ denote the set of faults included in a model $M$. A set of
  faults $F\neq\emptyset$ such that $F = F(M)$ for some testable PSO set $M$ is
  called a fault signature.
\end{definition}
The set $M_{1} = \{e_1,e_2,e_3,e_5\}$ used above has the fault
signature $F(M_1) = \{f_2\}$, and $M_1$ is the largest testable PSO set
with this fault signature. Hence, the fault-to-noise ratio cannot be
improved further by adding more equations without increasing  the set of
faults which the residual is sensitive to. This makes $M_1$ suitable for
constructing a residual sensitive to $f_2$.

The following lemma shows that for any given fault signature there is a
largest testable PSO set with that signature.
\begin{lemma}\label{lem:mset}
  Given a fault signature $F$, there exists a unique testable PSO set
  $M$ such that $F(M) = F$ and $M' \subset M$ for all testable PSO sets
  $M'$ satisfying $F(M') = F$.
\end{lemma}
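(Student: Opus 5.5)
The natural candidate is the union of all testable PSO sets with the given signature. Let $\mathcal{M}_F$ be the family of testable PSO sets $M'$ with $F(M')=F$. Since $F$ is a fault signature, $\mathcal{M}_F$ is nonempty, and it is finite because the model is finite. Define
\[
M_F=\bigcup_{M'\in\mathcal{M}_F} M'.
\]
If $M_F$ itself lies in $\mathcal{M}_F$, then it contains every member of $\mathcal{M}_F$ by construction. Uniqueness is then immediate: two sets with the containment property must contain each other. So the whole proof reduces to showing $M_F\in\mathcal{M}_F$.

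The fault part is easy. $F(\cdot)$ is additive over unions of equation sets, since the faults in a union are the faults appearing in some equation. Hence $F(M_F)=\bigcup_{M'\in\mathcal{M}_F}F(M')=F$.

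The PSO part follows from the standard fact (used in \cite{MSOalg:2008}) that a union of PSO sets is PSO. For $M_1$ and $M_2$ PSO, $M_1\subset M_1\cup M_2$ and $M^+$ is the largest PSO subset. Therefore $M_1\subseteq (M_1\cup M_2)^+$, and likewise $M_2\subseteq (M_1\cup M_2)^+$. This gives $(M_1\cup M_2)^+=M_1\cup M_2$, and induction over the finite family extends it to $M_F$.

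The main obstacle is testability of the union. Definition~\ref{def:ss} leaves the computational constraint abstract, so closure under unions cannot be derived from it. I would instead use the operator $M^*$, which the paper defines as \emph{the largest} testable PSO subset. That definition implicitly asserts that such a largest element exists for every model, and that is exactly the closure property needed. Apply it with $M=M_F$. Every $M'\in\mathcal{M}_F$ is a testable PSO subset of $M_F$, so $M'\subseteq M_F^*$, and hence $M_F\subseteq M_F^*$. Since $M_F^*\subseteq M_F$ by definition, we get $M_F^*=M_F$, so $M_F$ is testable PSO.

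The argument is valid only under the assumption that the union of two testable PSO sets is again testable PSO. This assumption is equivalent to $M^*$ being well defined, and I would state it explicitly in the proof. If it were not intended, the lemma could fail for a general structural characterization. I would therefore note that every concrete constraint used in the paper, such as back-substitution computability or low structural index, must be checked to satisfy this union-closure property.
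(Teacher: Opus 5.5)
Your proof is correct and follows the same route as the paper: take the union $M_F$ of all testable PSO sets with signature $F$ and show that it is itself a testable PSO set with $F(M_F)=F$. The paper simply asserts that a union of testable sets is testable. You also justify $F(M_F)=F$ and uniqueness, and you correctly note that closure under unions is exactly what is needed for $M^*$ to be well defined as a greatest element under inclusion, making explicit an assumption the paper leaves implicit.
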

\begin{proof}
  Let $M$ be the union of all testable PSO sets $M'$ such that
  $F(M') = F$. By Definition~\ref{def:fset}, there exists at least one
  testable PSO set $M'$ with $F(M') = F$, so the union is non-empty.
  Moreover, $M$ is a testable PSO set since it is a union of testable
  sets, and it clearly contains every such $M'$ as a subset.
\end{proof}

It follows that the sets given by Lemma~\ref{lem:mset} are of particular
interest when designing residuals, which motivates the following
definition.
\begin{definition}[RG set]\label{def:rset}
  A testable PSO set $M$ is called a Residual Generation set
  (RG set) if it is the largest testable PSO set such that $F(M) = F$
  for a given fault signature $F$.
\end{definition}

\subsection{Fault detectability and fault isolability}
The concepts structurally detectable faults and structurally isolability of multiple 
faults from \cite{DX10MFS:2010} 
are now modified by replacing the $M^+$ operator by $M^*$. It is assumed that a 
fault $f$ affects only one equation and that equation is denoted by $e_f$. 
\begin{definition}[Structurally detectable]
  Let $F(M)$ denote the set of faults included in a model $M$.
  A fault $f$ is structurally detectable in $M$ if $f\in F(M^*)$.
\end{definition}
A fault mode is represented by a set $F_i$ of faults, and a fault mode $F_i$ is isolable 
from a fault mode $F_j$ if there exists a residual sensitive to some fault
of $F_i$ but insensitive to all faults in $F_j$. A formal definition is:
\begin{definition}[Structurally isolable]
  A fault mode $F_i$ is structurally isolable from mode $F_j$ in a model $M$ if
\begin{equation*}
  \exists f\in F_i: e_f\in (M \setminus eq(F_j))^*,
\end{equation*}
where $eq(F_j)=\cup_{f\in F_j} \{e_f\}$.
\end{definition}

The set of all TESs  forms a partially ordered set under set 
inclusion, where the MTESs are the minimal elements in the graph-theoretical sense.
For each TES there is an associated set of faults included in the corresponding submodel
given by the following definition from \cite{DX10MFS:2010}.
\begin{definition}[TS and MTS]
A subset of faults $F$ is 
a test support TS if there exists a PSO set $M$ such that $F(M)=F$.
A test support is a minimal test support MTS if no proper submodel is a test support
\end{definition}
There is a one-to-one correspondence between the TESs and the TSs, 
and the collection of all TSs characterizes the complete multiple-fault isolability 
properties of the model.  
Hence, the MTESs and the corresponding MTSs contain all information about the multiple 
fault isolability properties of the model.

We now turn to the corresponding results for RG sets. 
In general, the set of all RG sets also forms a partially ordered set under inclusion. 
There is again a one-to-one correspondence between the RG sets and all possible 
fault signatures, which completely characterizes the multiple-fault isolability properties 
of the model under the restrictions considered in this paper. 
The main difference between TESs and RG sets can be illustrated using model~\eqref{eq:exsyst}. 
In this example, there are two RG sets, $M_1 = \{e_1,e_2,e_3,e_5\}$ and 
$M_2 = \{e_1,e_2,e_3,e_4,e_5\}$, with corresponding fault signatures $F_1 = \{f_2\}$ 
and $F_2 = \{f_1,f_2\}$, respectively. The set $M_1$ is the only minimal element. 

Consequently, it is not sufficient to consider only the minimal RG sets 
when analyzing the structural fault detection and isolation properties of the model.
A union of RG sets is a subset of an RG set with a fault signature that is equal to
the union of the fault signatures of the individual RG sets. Hence, the union of
RG sets does not provide any additional information about the fault isolability
properties of the model. 
This motivates the following definition.
\begin{definition}[Irreducible fault signature and IRG set]\label{def:iset}
A fault signature is called irreducible if it cannot be written as a union other fault signatures.
An RG set is called an Irreducible RG (IRG) set if its fault signature is irreducible.
\end{definition}

In graph-theoretical terms, the set of all fault signatures can be viewed as a join-semilattice, 
where the join operation is given by set union, and the irreducible fault signatures of
 the join-irreducible elements in this structure; see, for example,~\cite{MR0227053}.

The following result summarizes the discussion and shows that, 
in the context of this paper, IRG sets play a role analogous to that of MTESs.
\begin{theorem}\label{thm:i}
There is a one-to-one correspondence between the set of all RG sets and the
set of all fault signatures. Furthermore, the fault signature of any RG set can be written as 
as a union of irreducible fault signatures, and the set of all irreducible fault signatures
is the minimal set with this property.
\end{theorem}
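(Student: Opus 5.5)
The plan has three parts: a bijection $M\mapsto F(M)$ from RG sets to fault signatures, an induction showing every signature is a union of irreducible ones, and a minimality argument.

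\textbf{Step 1: the bijection.} Define $\Phi(M)=F(M)$ on RG sets.

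By Definition~\ref{def:rset}, an RG set $M$ is the largest testable PSO set with $F(M)=F$ for some fault signature $F$. Hence $F(M)$ is a fault signature, and $\Phi$ maps RG sets into fault signatures.

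Surjectivity follows from Lemma~\ref{lem:mset}. For every fault signature $F$ there exists a unique largest testable PSO set $M$ with $F(M)=F$, and this $M$ is an RG set.

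Injectivity follows from the same uniqueness. If two RG sets $M,M'$ satisfy $F(M)=F(M')=F$, then each is the largest testable PSO set with signature $F$. So each contains the other, and $M=M'$.

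\textbf{Step 2: every signature is a union of irreducible signatures.} I will argue by strong induction on $|F|$, which is finite since the fault set is finite.

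If $F$ is irreducible, it is a union of one irreducible signature, namely itself.

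Otherwise, by Definition~\ref{def:iset}, $F=\bigcup_i F_i$, where each $F_i$ is a fault signature different from $F$. Since $F_i\subseteq F$ and $F_i\neq F$, each $F_i$ is a proper subset and $|F_i|<|F|$. By the induction hypothesis each $F_i$ is a union of irreducible signatures, and therefore so is $F$.

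Through the bijection of Step 1, this covers the fault signature of every RG set.

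\textbf{Step 3: minimality.} Let $\mathcal{S}$ be any set of fault signatures such that every fault signature is a union of members of $\mathcal{S}$. Take an irreducible $F$ and write $F=\bigcup_j G_j$ with $G_j\in\mathcal{S}$, each $G_j\subseteq F$.

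If every $G_j$ were different from $F$, this would write $F$ as a union of other fault signatures, contradicting irreducibility. Hence $F=G_j$ for some $j$, so $F\in\mathcal{S}$.

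Therefore the set of irreducible signatures is contained in every set with the union-generating property. Combined with Step 2, it is the minimum such set, not merely a minimal one.

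\textbf{Main obstacle.} The delicate point is interpreting ``union of other fault signatures'' in Definition~\ref{def:iset}. I will read it as a union of fault signatures each distinct from $F$, and the empty union is excluded because $F\neq\emptyset$. Under this reading the descent in Step 2 terminates and the contradiction in Step 3 goes through, so I will state this convention explicitly at the start of the proof.

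The theorem's ``any RG set'' clause also has to be applied to the signature side. This is legitimate by the bijection of Step 1, so every fault signature is covered.
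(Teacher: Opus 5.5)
Your proposal is correct and follows essentially the same route as the paper: the bijection comes from Definition~\ref{def:rset} together with Lemma~\ref{lem:mset}, every fault signature decomposes into irreducible ones, and minimality holds because any generating family must contain each irreducible signature. You make explicit two things the paper leaves implicit: the strong induction on $|F|$ that makes the decomposition terminate, and the requirement that the generating family consist of fault signatures, which the paper's ``any collection of sets'' tacitly assumes.
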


\begin{proof}
  The one-to-one correspondence follows directly from Definition~\ref{def:rset} 
  and Lemma~\ref{lem:mset}. 
  By Definition~\ref{def:iset},  any fault signature is either an irreducible fault signatur itself,
   or  can be represented as a union of irreducible fault signatures. 
  Finally, any collection of sets with this property must contain all irreducible fault signatures,
  which proves minimality.
\end{proof}

  \section{Algorithm}\label{sec:algorithm}

The operator $M^{*}$ will be used in the algorithm below. Before presenting it, three
important differences compared to the MTES algorithm are highlighted, since they must be
taken into account when modifying the MTES algorithm.

The first difference concerns a key operation. In the MTES algorithm, a frequently used
operation is to remove one equation and then compute the overdetermined part,
$(M\setminus\{e\})^+$. If $e$ is an arbitrary equation in a PSO set $M$, then
\begin{equation*}
  \varphi\left((M\setminus\{e\})^+\right) = \varphi(M) - 1,
\end{equation*}
that is, the order of redundancy decreases by one. 

In the algorithm below, the
corresponding operation is to first remove one equation and then compute the largest testable
subset $(M\setminus\{e\})^*$. 
For system~\eqref{eq:exsyst}, the order of redundancy is
$\varphi(M)=3$. If equation $e_5$ is removed, the largest testable subset is empty,
$(M\setminus\{e_5\})^*=\emptyset$, and thus
\begin{equation*}
  \varphi\left((M\setminus\{e_5\})^*\right) = 0 < \varphi(M)-1 = 2.
\end{equation*}

The second difference is that, for a given model, all MTESs have the same order of
redundancy, which can be computed in advance; see Lemma~3 in~\cite{DX10MFS:2010}.
This property is used as a stopping criterion in the MTES algorithm. In contrast, the
IRG sets of model~\eqref{eq:exsyst} have different orders of redundancy:
$\varphi(\{e_1,e_2,e_3,e_5\}) = 2$ and $\varphi(\{e_1,e_2,e_3,e_4,e_5\}) = 3$.

The third difference is related to structural isolability as defined
in~\cite{Krysander:2008}, which is symmetric: if $f_i$ is isolable from $f_j$, then
$f_j$ is isolable from $f_i$.
In the example in Section~\ref{sec:definitions}, the two fault signatures are $\{f_2\}$ and
$\{f_1,f_2\}$. This means that $f_2$ is structurally isolable from $f_1$, but not vice
versa. 
Without this symmetry, it is not possible to define equivalence classes as in
Section~V.B of~\cite{MSOalg:2008}, which were used to improve the efficiency of the MSO
and MTES algorithms.
Furthermore, due to the lack of symmetry, it is not possible to define the sets $\mathcal{R}$ which 
were used in these algorithms to prevent that the same set was found more than once.

The algorithm below follows the same basic ideas as in~\cite{DX10MFS:2010}. 
It is initialized with the PSO sets $M = M_0^*$, where $M_0$ is the model to be analyzed.
Equations with faults are removed and largest testable subsets are computed alternately 
until the there is a set without faults reached in each branch of the recursion tree.

The output of the algorithm is the set of all RG subsets of $M_0$.
It is straightforward to identify which RG sets are IRG sets using
Definition~\ref{def:iset}, so this step is not included here.
\begin{matlab}
  function $\mathcal{S}$ = |FindRG|($M$)
    $\mathcal{S} = \emptyset$;
    while $F(R)\neq\emptyset$
      Select an $e\in R$ such that $F(\{e\})\neq\emptyset$
      $M'=(M\setminus\{e\})^*$;
      if $F(M')\neq\emptyset$
        $\mathcal{S}=\mathcal{S}\cup \{M'\};$
        $\mathcal{S}=\mathcal{S}\cup \verb+FindRG+(M');$
      end
    end
  end
\end{matlab}

One modification compared to~\cite{DX10MFS:2010} is that the operator $M^{+}$ has been
replaced by $M^{*}$ in line~5. 
Moreover, since the order of redundancy of the IRG sets is
not known a priori, the stopping criterion in line~6 is that $F(M')=F((M\setminus\{e\})^*)$ becomes empty, rather
than reaching a precomputed redundancy level. 
This situation occurs, for example, if $e_5$ is removed from system~\eqref{eq:exsyst}. 

The section concludes with a proof establishing the correctness of the algorithm.
\begin{theorem}
  The output of the algorithm is the set of all $RG$ sets in the model $M_0$.
\end{theorem}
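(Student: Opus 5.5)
The plan is to read the pseudocode as intended: $R$ is initialised to $M$ at the start of each call of \verb+FindRG+, and the selected $e$ is removed from $R$ in each pass of the while loop. Under this reading the loop tries, for every equation $e\in M$ with $F(\{e\})\neq\emptyset$, the set $(M\setminus\{e\})^*$ exactly once. I also take the output to include the initial set $M_0^*$ whenever $F(M_0^*)\neq\emptyset$. Fixing this reading explicitly is the main obstacle, since the listing leaves $R$ undefined; once it is fixed, the rest is a short argument by induction.

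\textbf{Facts about $M^*$.} As in the proof of Lemma~\ref{lem:mset}, a union of testable PSO sets is a testable PSO set. Hence $M^*$ is the union of all testable PSO subsets of $M$, so it contains every testable PSO subset of $M$; in particular $M\subseteq N$ implies $M^*\subseteq N^*$. I will also use that each fault $f$ appears only in its own equation $e_f$. Two consequences follow. First, if $e\notin M'$, then no fault of $e$ belongs to $F(M')$. Second, if $f\notin F(N)$, then $e_f\notin N$. Termination is immediate: every recursive call is made on a strictly smaller set, $M'\subseteq M\setminus\{e\}$.

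\textbf{Soundness.} I will show that every set output, including $M_0^*$, is an RG set, by induction along the recursion tree.
\begin{itemize}
\item Base case. $M_0^*$ is the largest testable PSO set overall, so it is the largest one with signature $F(M_0^*)$, and hence an RG set.
\item Inductive step. Let $M$ be an RG set on which \verb+FindRG+ is called, and let $M'=(M\setminus\{e\})^*$ with $F(M')\neq\emptyset$. Then $F(M')$ is a fault signature. By Lemma~\ref{lem:mset}, let $N\supseteq M'$ be the RG set with $F(N)=F(M')$.
\item $N\subseteq M$: the set $N\cup M$ is a testable PSO set with $F(N\cup M)=F(M)$, because $F(N)=F(M')\subseteq F(M)$. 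Maximality of $M$ then gives $N\cup M= M$.
\item $e\notin N$: since $e\notin M'$, no fault of $e$ lies in $F(M')=F(N)$, and each such fault occurs only in $e$.
\item Therefore $N$ is a testable PSO subset of $M\setminus\{e\}$, so $N\subseteq (M\setminus\{e\})^*=M'$. Hence $N=M'$, and $M'$ is an RG set.
\end{itemize}

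\textbf{Completeness.} Let $N$ be any RG set. Since $N$ is a testable PSO subset of $M_0$, we have $N\subseteq M_0^*$.
\begin{itemize}
\item Among the sets on which \verb+FindRG+ is called (all of which contain $N$ at the start, since $N\subseteq M_0^*$), choose a set $M\supseteq N$ of minimal cardinality. By soundness, $M$ is an RG set.
\item Suppose $N\subsetneq M$. If $F(N)=F(M)$, then $N$ and $M$ are both the largest testable PSO set with this signature, so $N=M$, a contradiction. Hence there is a fault $f\in F(M)\setminus F(N)$.
\item The equation $e_f$ lies in $M$ and has $F(\{e_f\})\neq\emptyset$, so the loop selects it at some point. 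Since $f\notin F(N)$, we have $e_f\notin N$, so $N\subseteq M\setminus\{e_f\}$ and therefore $N\subseteq M'=(M\setminus\{e_f\})^*$.
\item Then $F(M')\supseteq F(N)\neq\emptyset$, so $M'$ is added to $\mathcal{S}$ and \verb+FindRG+ is called on it. But $|M'|<|M|$ and $M'\supseteq N$, contradicting the minimality of $M$.
\item Hence $N=M$, and $N$ appears in the output.
\end{itemize}
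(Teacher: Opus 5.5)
Your proof is correct, and it takes a different, more complete route than the paper's. The paper proves only that every RG set is found. Starting from $M_0^*$, it builds a branch of the recursion tree down to an arbitrary RG set $M'\subsetneq M_0^*$ by repeatedly passing to $M^s=(M\setminus\{e_f\})^*$ with $f\in F(M)\setminus F(M')$. It justifies each step by a contradiction on fault sets: if $F(M')\not\subseteq F(M^s)$, then $M'\cup M^s$ would be a strictly larger testable PSO subset of $M\setminus\{e_f\}$. The paper never shows that the sets placed in $\mathcal{S}$ are RG sets.

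Your completeness argument is the same idea packaged as a minimal-cardinality contradiction, with two improvements:
\begin{itemize}
\item You obtain the set inclusion $N\subseteq(M\setminus\{e_f\})^*$ directly from $e_f\notin N$ and the fact that $M^*$ contains every testable PSO subset of $M$. This inclusion is what the iteration needs, whereas the paper only asserts the weaker $F(M')\subseteq F(M^s)$.
\item Your choice of a minimal $M\supseteq N$ settles why the descent must stop exactly at the target, a point the paper leaves implicit.
\end{itemize}

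Your soundness induction is the genuinely new ingredient and supplies the missing inclusion. You take the RG set $N$ with the same signature as $M'$, show $N\subseteq M$ by maximality of the parent $M$, show $e\notin N$, and conclude $N\subseteq M'$.

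You are also right to fix two points the paper's proof leaves tacit: $R$ must be initialised to $M$ and shrunk in the loop, and $M_0^*$ must be counted in the output when $F(M_0^*)\neq\emptyset$.

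One small simplification: the completeness part does not need soundness to conclude $N=M$ from $F(N)=F(M)$. Here $M$ is a testable PSO set with signature $F(N)$ and $N$ is the largest such set, so $M\subseteq N$ follows immediately.
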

\begin{proof}
Input to the algorithm is $M=M_0^*$.
We show that an arbitrary RG set $M' \subsetneq M$ is found at least once 
by the algorithm by constructing a branch in the recursion tree from $M$ to $M'$.

Assume that $f\in F(M)\setminus F(M')$, and define 
\begin{equation*}
  M^s=(M\setminus\{e_f\})^*.
\end{equation*}
We claim that $F(M') \subset F(M^s)$.  
If this were not the case, then
\begin{equation*}
  F(M^s) \subsetneq F(M'\cup M^s) \subset F(M) \setminus \{f\} = F(M \setminus \{e_f\}),
\end{equation*}
where $M'\cup M^s \subset M$ is a testable PSO set, which contradicts the definition of $M^s$.

If $M^s = M'$, the claim follows.  
Otherwise, we repeat the procedure above with $M$ replaced by $M^s$ until $M'$ is reached.
\end{proof}

\section{Example: Differential algebraic equations}
\label{sec:examples}
In this section, semi-explicit linear differential-algebraic equations (DAEs) of the form
\begin{subequations}\label{eq:SEDAE}
  \begin{align}
    \dot{\bf x}_1&=A_{11}  {\bf x}_1+A_{12}  {\bf x}_2+  B_{1} {\bf u},\label{eq:SEDAEa}\\
    0&=A_{21}  {\bf x}_1+A_{22}  {\bf x}_2+  B_{2}{\bf u},\label{eq:SEDAEb}\\
    {\bf y}&=C_1{\bf x}_1+C_2{\bf x}_2,\label{eq:SEDAEc}
  \end{align}
\end{subequations}
are be used to illustrate how the theory developed in the previous sections can be applied to
identify subsystems that are compatible with a specified residual generation methodology.
The following example is considered.
\begin{equation}\label{eq:exDAE}
  \begin{alignedat}{2}
    e_1:&& \quad \dot{x}_{1}&=x_{2}-x_{1}\\
    e_2:&&  \dot{x}_{2}&=x_{1}+x_{2}-2x_{3}+f_{1}\\
    e_3:&&  \dot{x}_{3}&=x_{2}-x_{3}\\
    e_4:&& y_{1}&=x_{1}+f_{2}\\
    e_5:&& y_{2}&=x_{3}+f_{3}\\
    e_6:&& y_{3}&=x_{1}-x_{3}
  \end{alignedat}
\end{equation} 
which is an ODE system. 
However, the analysis below considers subsystems of \eqref{eq:exDAE}, 
which are in general semi-explicit linear DAEs. 

In this example, a set of equations is defined to be structurally testable 
if it has low structural differential index. 
The differential index can be seen as a measure of how far a DAE is from being an ordinary
differential equation (ODE); an index of zero corresponds to an ODE system.
Models with index zero or one are referred to as low-index DAEs and can be integrated
using standard ODE solvers; see, e.g.,~\cite{ascher1998computer}.

The number of unknown signals is equal to the number of equation in the 
subsystem \eqref{eq:SEDAEa}, and \eqref{eq:SEDAEb}.
Such a quadratic system is a low-index DAE system if $A_{22}$ is invertible.
In this linear case \eqref{eq:SEDAEb} gives
\begin{equation*}
    {\bf x}_2=-A_{22}^{-1}(A_{21}{\bf x}_1+B_2{\bf u})
\end{equation*}
and substitution into \eqref{eq:SEDAEa} yields an ODE system.

In~\cite{IFACWC2017:Index:}, the concept of low index was extended to a structural property of non-quadratic, 
structurally overdetermined PSO systems, 
and model~\eqref{eq:SEDAE} has low structural index if and only if the matrix
\begin{equation*}
  \begin{bmatrix}
    A_{22}\\ C_{2}
  \end{bmatrix}
\end{equation*}
has full structural column rank. It was shown how such
structurally overdetermined, low-index submodels can be used for residual 
generation for more general non-linear DAE systems,
either by direct sequential residual generation with integral
causality or observer based residual methods. 

In Section~IV.B of~\cite{SMCA12:CausalIsol}, an operator was introduced that returns
the largest PSO set that is structurally monitorable under integral causality, and a
semi-explicit DAE has low structural index if and only if it is structurally
monitorable under integral causality; see Theorem~14 in~\cite{IFACWC2017:Index:}.
The submodel $M^*$ is defined as largest PSO subset of $M$ that is structurally
monitorable under integral causality.

Applying the algorithm in Section~\ref{sec:algorithm} with the operator
$M^{*}$ defined above yields four IRG sets, and the corresponding irreducible  fault signatures,
as shown in Table~\ref{tab:irg}.
\begin{table}[h]
\centering
\begin{tabular}{|l|l|}
\hline
IRG set & Fault signature \\
\hline
$\{e_{1},e_{2},e_{3}, e_{6}\}$ & $\{f_{1}\}$\\
$\{e_{1},e_{2},e_{3}, e_{4}, e_{6}\}$ & $\{f_{1}, f_{2}\}$\\
$\{e_{1},e_{2},e_{3}, e_{5}, e_{6}\}$ & $\{f_{1}, f_{3}\}$\\
$\{e_{4},e_{5},e_{6}\}$ & $\{f_{2}, f_{3}\}$\\
\hline
\end{tabular}
\caption{IRG sets and fault signatures of the model \eqref{eq:exDAE}}
\label{tab:irg}
\end{table}
 
Table~\ref{tab:mtes} lists the MTESs and their corresponding test supports for the model. 
The following observations can be made. First, $\{e_{1},e_{2},e_{3},e_{9}\}$ 
is the only IRG set that is also an MTES. This is because the MTESs 
$\{e_{1},e_{3},e_{4},e_{6}\}$ and $\{e_{1},e_{3},e_{5},e_{6}\}$ are not low-index DAEs, 
since in both cases the variable $x_2$ cannot be determined algebraically. However, 
by adding equation $e_2$ to these sets, the second and third IRG sets are obtained, 
at the cost of increased fault sensitivity due to the inclusion of the fault $f_1$.

Second, in the MTES framework, for each individual fault there exists a fault signature 
that contains this fault but not the others, whereas under the computational constraint 
the fault mode $\{f_{3}\}$ is not isolable from the mode $\{f_{1},f_{2}\}$.

\begin{table}[h]
\centering
\begin{tabular}{|l|l|}
\hline
MTES & Test support \\
\hline
$\{e_{1},e_{2},e_{3}, e_{6}\}$ & $\{f_{1}\}$\\
$\{e_{1},e_{3}, e_{4}, e_{6}\}$ & $\{f_{2}\}$\\
$\{e_{1},e_{3}, e_{5}, e_{6}\}$ & $\{f_{3}\}$\\
\hline
\end{tabular}
\caption{MTESs and test supports of the model \eqref{eq:exDAE}}
\label{tab:mtes}
\end{table}

  \section{Conclusions}

The paper has introduced a generalized structural framework for residual generation 
that simultaneously accounts for fault locations and constraints imposed by a chosen 
residual generation methodology.
  By extending the notions of PSO sets, MSO sets, and TES/MTES to testable PSO sets and RG sets, 
  and by characterizing irreducible fault signature sets as the join-irreducible elements of the resulting lattice, 
  the framework captures both fault isolability and computational limitations in a unified way.

An algorithm for computing all RG sets has been proposed, 
obtained by replacing the overdetermined-part operator with 
an operator $M^*$ that encodes the admissible residual generation method. 
The DAE example demonstrates how structural differential index 
can be used to define $M^*$ and to derive IRG sets and their associated irreducible fault signatures. 
 The results show that IRG sets play, for method-constrained 
diagnosis, the same central role that MTESs play in previous works.

\bibliographystyle{plain}
\bibliography{references}

\end{document}